\documentclass[showpacs,amsmath,amssymb,aps,prd,groupedaddress,superscriptaddress,nofootinbib]{revtex4-2}
\usepackage[utf8]{inputenc}
\usepackage{graphicx}
\usepackage{tensor}
\usepackage{bm}
\usepackage{color}

\newcommand{\leviconnection}[1]{\tensor{\mathring{\bm{\nabla}}}{#1}}%eu
\newcommand{\rsconnection}[1]{\tensor{\mathring{\omega}}{#1}}

\newcommand{\connection}[2][]{\tensor{#1{\Gamma}}{#2}}
\newcommand{\sconnection}[2][]{\tensor{#1\omega}{#2}}
\newcommand{\potential}[1]{\tensor{\Sigma}{#1}}%eu
\newcommand{\pd}[1]{\tensor{\partial}{#1}}%eu
\newcommand{\nablab}{\bm{\nabla}\!}%eu
\newcommand{\nablatensor}[2][]{\tensor{#1{\nablab}\!}{#2}}%eu
\newcommand{\e}[2][]{\tensor{#1{e}}{#2}}
\newcommand{\lorentz}[2][]{\tensor{#1\Lambda}{#2}}
\newcommand{\nonholon}[2][]{\tensor{#1{\Omega}}{#2}}

\newcommand{\bracket}[1]{\left<#1\right>}
\newcommand{\torsion}[1]{\tensor{T}{#1}}

\newcommand{\energy}[2][]{\tensor{{#1 t}}{#2}}

\newcommand{\bbtheta}{\hat{\theta}}
\newcommand{\bbthree}{\hat{\phi}}

\newcommand{\projection}[2][]{\tensor{{#1 b}}{#2}}

\newtheorem{theorem}{Theorem}[section]

\newcommand{\Weitzenbock}{Weitzenb{\" o}ck\ }
\newcommand{\volume}[1][]{{\cal V #1}}

\newcommand{\temperature}{\mathcal{T}}

\begin{document}

% Use the \preprint command to place your local institutional report
% number in the upper righthand corner of the title page in preprint mode.
% Multiple \preprint commands are allowed.
% Use the 'preprintnumbers' class option to override journal defaults
% to display numbers if necessary
%\preprint{}

%Title of paper
\title{Thermal equilibrium in general relativity and the special nature of the cosmological constant}

% repeat the \author .. \affiliation  etc. as needed
% \email, \thanks, \homepage, \altaffiliation all apply to the current
% author. Explanatory text should go in the []'s, actual e-mail
% address or url should go in the {}'s for \email and \homepage.
% Please use the appropriate macro foreach each type of information

% \affiliation command applies to all authors since the last
% \affiliation command. The \affiliation command should follow the
% other information
% \affiliation can be followed by \email, \homepage, \thanks as well.

\author{M. L. Rodrigues}
\email[]{marcoslucenarodrigues@hotmail.com}

\author{J. B. Formiga}
\email[]{jansen@fisica.ufpb.br}
%\homepage[]{Your web page}
%\thanks{}
%\altaffiliation{}

\affiliation{Departamento de Física, Universidade Federal da Paraíba, Caixa Postal 5008, 58051-970 João Pessoa, Pb, Brazil}

%\date{\today}

\begin{abstract}
The problem of thermal equilibrium in general relativity is discussed and the unique role played by the fluid associated with the cosmological constant is emphasized. In addition to being the only fluid that is always in thermal equilibrium, it is shown here that its energy can also be seen as a universal quantity given by $E_\Lambda=\rho_\Lambda\volume$, where $\volume$ is a spatial volume. Furthermore, if the cosmological constant turns out to be a universal constant, then the conclusion that any spatial volume $\volume$ has an intrinsic energy follows naturally from the approach considered here. A possible application to quantum theories of gravity that postulate the existence of a minimal length scale is discussed. In studying the energies, the so-called teleparallel equivalent of general relativity is used, and the difficulty with the zero-point energy is discussed in light of the gravitational energy problem. The quasi-local energies of the Reissner-Nordström-de Sitter black hole and of a regular black hole are calculated. The results obtained here with the teleparallel theory are consistent with our expectations.
\end{abstract}

% insert suggested PACS numbers in braces on next line
%\pacs{PACS numbers: 04.50.Kd, 02.40.Ma, 11.25.Hf, 04.30.-w}
% insert suggested keywords - APS authors don't need to do this
%\keywords{Teleparallel theories;.}

%\maketitle must follow title, authors, abstract, \pacs, and \keywords
\maketitle

\section{Introduction} % 1/1/2025

The concepts of thermodynamics have been extensively studied in general relativity, especially those regarding the laws of black-hole thermodynamics \cite{Bardeen1973,PhysRevD.7.2333,Hawking1975,PhysRevD.13.191,PhysRevD.14.870,PhysRevD.48.R3427} and fluid mechanics \cite{doi:10.1073/pnas.14.9.701,PhysRev.35.904,PhysRev.36.1791,PhysRev.58.919,PhysRev.76.427.2,0264-9381-10-8-017,doi:10.1142/S021827181846001X,PhysRevD.98.064001,Santiago_2019}. However, we do not yet have a clear understanding of some of those concepts. In this paper we use the standard notion of thermodynamic equilibrium to clarify some important features of general relativity and, in parallel, discuss the gravitation energy problem and the energy associated with the cosmological constant.

Some of the problems with thermodynamics in curved spacetimes are related to black holes and also with the notion of thermal equilibrium. Black hole thermodynamics is not as well established as we tend to think it is. There is no observation of Hawking radiation or any practical application of this thermodynamics. What we have is strong theoretical evidence that there is such a thing as a black hole thermodynamics. But even at the theoretical level we face some difficulties such as the correct form of the first law for regular black holes \cite{Ma_2014} and the definition of thermal equilibrium \cite{PhysRevD.110.104065}. Considering a perfect fluid, Tolman and Ehrenfest \cite{PhysRev.35.904,PhysRev.36.1791} established that a fluid at rest in a static spacetime is in thermal equilibrium if the product $T\sqrt{-g_{00}}$ is constant\footnote{For a pedagogical overview of Tolman temperature gradients, see Ref.~\cite{doi:10.1142/S021827181846001X}. For the generalization to stationary spacetimes, see Refs.~\cite{PhysRev.76.427.2} and \cite{PhysRevD.98.064001}.}, where $g_{00}$ is written in a coordinate system adapted to the time symmetry. This result can be seen as a particular case of Eckart's approach to fluid mechanics \cite{PhysRev.58.919}. In this approach, the energy-momentum tensor $\tensor{T}{^\mu^\nu}$ is decomposed into energy density $\rho$, isotropic pressure $p$, heat flux density $q^\mu$, and anisotropic stress tensor $\tensor{\pi}{^\mu^\nu}$. Then, thermal equilibrium is defined by $q^\mu=0$, and the Tolman-Ehrenfest criterion is recovered (see e.g. Appendix A of Ref.~\cite{PhysRevD.110.104065}). However, there is some controversy about whether thermal equilibrium can be established in some static spacetimes such as the Schwarzschild-de Sitter one. For example, in Ref.~\cite{PhysRevD.110.104065} the authors assume that the Schwarzschild-de Sitter horizons are at different temperatures and question whether it is possible to have $q^\mu=0$ as a condition for thermal equilibrium in this case. As a way out of this problem, they propose a generalization of thermal equilibrium for a static fluid in a static spacetime to a stationary heat conduction. In doing so, they neglect the backreaction of the fluid and show that, within this approximation, their definition implies that the two horizons remain in thermal equilibrium.

In this paper we will show some general results related to the condition $q^\mu=0$ and discuss the possible consequences. Section~\ref{20042026a} is devoted to the discussion of two results that have received little attention in the literature. The first concerns the fact that Einstein's field equations are incompatible with a nonvanishing heat flux from the viewpoint of static observers in a static spacetime, while the second shows the very special nature of the cosmological constant. We also argue that the two horizons of the Schwarzschild-de Sitter spacetime should be seen as having the same temperature, which answers the question raised in Ref.~\cite{PhysRevD.110.104065} without the need to change the notion of thermal equilibrium. In Sec.~\ref{20042026b}, we use the so-called teleparallel equivalent of general relativity (TEGR) to expand the discussion to the gravitational energy problem and show that the predictions of the TEGR are consistent with our expectations. As a byproduct, we show that the energy associated to the cosmological constant has a universal character and, in case $\Lambda$ is a universal constant, we can see this energy as a sort of ``geometrical energy'' associated to a given spatial volume. The consequences of this result are discussed and we argue that this energy may be useful for formulating a theory with a minimal length (or volume) without abandoning Lorentz invariance. We end Sec.~\ref{20042026b} by calculating the quasi-local energy of two interesting static spacetimes, the Reissner-Nordström-de Sitter black hole and a regular black hole. Finally, we summarize the conclusions in Sec.~\ref{20042026c}

\section{Thermal equilibrium}\label{20042026a}
Let $u^\mu$ be a four-velocity field. The projection tensor onto the three-space orthogonal to $u^\mu$ is
\begin{align}
\tensor{h}{_\mu_\nu}=g_{\mu\nu}+u_\mu u_\nu.
\label{04012025a}
\end{align}
It follows that the stress-energy tensor can be decomposed as
\begin{align}
\tensor{T}{^\mu^\nu}=\rho u^\mu u^\nu+p \tensor{h}{^\mu^\nu}+2q^{(\mu}u^{\nu)}+\tensor{\pi}{^\mu^\nu},
\label{14112025b}
\end{align}
where the energy density, the isotropic pressure, heat flux, and the anisotropic stress tensor are given by
\begin{align}
\rho=\tensor{T}{_\mu_\nu}u^\mu u^\nu,\ p=\frac{1}{3}\tensor{T}{_\mu_\nu}\tensor{h}{^\mu^\nu},
\end{align} 
\begin{align}
q^\mu=-T^{\alpha\beta}u_\alpha \tensor{h}{_\beta^\mu},\ \tensor{\pi}{^\mu^\nu}=\tensor{h}{^\mu_\alpha}\tensor{h}{^\nu_\beta}\tensor{T}{^\alpha^\beta}-p\tensor{h}{^\mu^\nu}.
\label{03012025c}
\end{align}
It is straightforward to verify that 
\begin{align}
\tensor{h}{_\mu_\nu}u^\nu=0,\ q_\mu u^\mu=0,\ \tensor{\pi}{_\mu_\nu}u^\nu=0,
\label{14112025a}
\end{align}
and both $\tensor{h}{_\mu_\nu}$ and $\tensor{\pi}{_\mu_\nu}$ are symmetric.

A natural definition of thermal equilibrium is $q^\mu=0$, which we assume throughout this paper.

\subsection{Heat flux in static spacetimes}\label{13112025a}
In this section we show a result that might already be known, but we have not found any proof in the literature. We state this result as a theorem.

\begin{theorem}\label{06022025a}
In a static spacetime, $q^\mu$ vanishes in a frame of reference adapted to a coordinate system $x^\mu$ where\footnote{Notice that $g^{0i}=0$ is not an addition restriction; it follows from $g_{0i}=0$ (see, e.g., the usual ADM decomposition \cite{Arnowitt2008}).}
\begin{align}
\partial_t g_{\mu\nu}=0,\quad g_{0i}=0,\quad (g^{0i}=0).
\label{03012025a}
\end{align}
\end{theorem}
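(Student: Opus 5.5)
The plan is to use Einstein's field equations, since $q^\mu$ is defined from $T^{\mu\nu}$ and $T^{\mu\nu}$ is fixed by the geometry through $G_{\mu\nu}+\Lambda g_{\mu\nu}=8\pi T_{\mu\nu}$ (with any units). In the frame adapted to the coordinates (\ref{03012025a}), the static observers have $u^\mu=\delta^\mu_0/\sqrt{-g_{00}}$, so $u_\mu=(-\sqrt{-g_{00}},0,0,0)$. Because $g_{0i}=0=g^{0i}$, the projector (\ref{04012025a}) has $h_{00}=0$, $h_{0i}=0$ and $h_{ij}=g_{ij}$, so $\tensor{h}{_\beta^\mu}$ has only spatial–spatial components. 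Then (\ref{03012025c}) gives
\begin{align}
q^i=-T^{\alpha\beta}u_\alpha \tensor{h}{_\beta^i}=\sqrt{-g_{00}}\,T^{0j}\tensor{h}{_j^i}=\sqrt{-g_{00}}\,T^{0j}\delta_j^i,\qquad q^0=0 .
\end{align}
So the whole problem reduces to showing $T^{0i}=0$, equivalently $T_{0i}=0$, since the metric is block diagonal.

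By the field equations, $T_{0i}\propto R_{0i}-\tfrac12 R g_{0i}+\Lambda g_{0i}=R_{0i}$. It therefore suffices to prove $R_{0i}=0$ for a metric $ds^2=g_{00}(x^k)dt^2+g_{ij}(x^k)dx^i dx^j$. I would first list the Christoffel symbols, using $\partial_t g_{\mu\nu}=0$ and $g_{0i}=0$. The nonzero ones are $\Gamma^0_{0i}=\tfrac12 g^{00}\partial_i g_{00}$, $\Gamma^i_{00}=-\tfrac12 g^{ij}\partial_j g_{00}$ and $\Gamma^i_{jk}$ (purely spatial). The ones that vanish are $\Gamma^0_{00}$, $\Gamma^0_{ij}$ and $\Gamma^i_{0j}$.

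The key observation is parity under $t\to -t$: every nonvanishing symbol carries an even number of time indices. The product and derivative terms in
\begin{align}
R_{0i}=\partial_\alpha\Gamma^\alpha_{0i}-\partial_i\Gamma^\alpha_{0\alpha}+\Gamma^\alpha_{\alpha\beta}\Gamma^\beta_{0i}-\Gamma^\alpha_{i\beta}\Gamma^\beta_{0\alpha}
\end{align}
each carry one net time index, so each must contain a vanishing symbol or a $\partial_t$. The step needing the most care is this term-by-term check.

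First, $\partial_\alpha\Gamma^\alpha_{0i}=\partial_0\Gamma^0_{0i}+\partial_j\Gamma^j_{0i}=0$. Second, $\Gamma^\alpha_{0\alpha}=\Gamma^0_{00}+\Gamma^j_{0j}=0$, so $\partial_i\Gamma^\alpha_{0\alpha}=0$. Third, $\Gamma^\alpha_{\alpha\beta}\Gamma^\beta_{0i}=\Gamma^\alpha_{\alpha 0}\Gamma^0_{0i}=0$. Finally, the last term reduces to $\Gamma^0_{ij}\Gamma^j_{00}+\Gamma^j_{i0}\Gamma^0_{0j}=0$.

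Alternatively, I could argue invariantly: the static metric is invariant under the isometry $t\mapsto -t$, which sends $R_{0i}\mapsto -R_{0i}$, so $R_{0i}=0$. I would present the explicit check and mention the symmetry argument as a remark. This gives $T_{0i}=0$, hence $q^\mu=0$ for any matter content compatible with Einstein's equations, including the cosmological term.
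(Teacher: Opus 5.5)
Your proposal is correct and follows essentially the same route as the paper: use Einstein's equations to reduce $q^i$ to $\sqrt{-g_{00}}\,T^{0i}\propto R^{0i}=g^{00}g^{ij}R_{0j}$, then check term by term from the static Christoffel symbols that $R_{0j}=0$. Your term-by-term check matches the paper's, and the $t\mapsto -t$ isometry remark is a valid, more conceptual shortcut to the same vanishing of $R_{0i}$.
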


\textit{Proof.} Einstein's field equations allow us to write $T^{\mu\nu}=2k G^{\mu\nu}$, where $T^{\mu\nu}$ is the energy-momentum tensor of the matter field, $G^{\mu\nu}$ is the Einstein's tensor, and $k=1/16\pi$. From this equation and Eq.~\eqref{03012025c} we see that the heat flux must satisfy  
\begin{align}
q^\mu=-2kG^{\alpha\beta}u_\alpha\tensor{h}{_\beta^\mu},
\label{09052026a}
\end{align}
where the components of the four-velocity of an observer adapted to the coordinates $x^\mu$ are
\begin{align}
u^\mu=\frac{\delta^\mu_0}{\sqrt{-g_{00}}},\quad u_\mu=-\sqrt{-g_{00}}\delta_\mu^0.
\label{03012025b}
\end{align}
From Eqs.~\eqref{04012025a}, \eqref{09052026a} and \eqref{03012025b}, we find that
\begin{align}
q^\mu=2k\sqrt{-g_{00}}(G^{0\mu}-G^{00}\delta^\mu_0).
\end{align} 
Since it is clear that $q^0=0$, we are left only with $q^i$. For $\mu=i$, we have $q^i=2k\sqrt{-g_{00}}G^{0i}$; since $g^{0i}=0$, we have $G^{0i}=R^{0i}$ and consequently  $q^i=2k\sqrt{-g_{00}}R^{0i}$. To show that this vanishes, we use $R^{0i}=g^{00}g^{ij}R_{0j}$ and
\begin{align}
R_{0j}=\partial_\alpha\connection{^\alpha_0_j}-\partial_j\connection{^\alpha_\alpha_0}+\connection{^\beta_0_j}\connection{^\alpha_\alpha_\beta}-\connection{^\beta_0_\alpha}\connection{^\alpha_j_\beta}.
\label{04012025b}
\end{align}
Using Eq.~\eqref{03012025a}, we obtain 
\begin{align}
\connection{^\alpha_0_j}=(1/2)g^{00}\partial_jg_{00}\delta^\alpha_0,\ \connection{^\alpha_0_0}=(-1/2)g^{\alpha i}\partial_ig_{00},\  \connection{^0_i_j}=0.
\end{align}
From these equations and Eq.~\eqref{03012025a}, we see that 
\begin{align}
\partial_\alpha\connection{^\alpha_0_j}=0,\ \connection{^\alpha_\alpha_0}=0,\ \connection{^\beta_0_j}\connection{^\alpha_\alpha_\beta}=0,\ \connection{^\beta_0_\alpha}\connection{^\alpha_j_\beta}=0.
\end{align}
Substitution into Eq.~\eqref{04012025b} yields $R_{0j}=0$, which leads to $R^{0i}=0$ and, therefore, $q^i=0$.

The only assumption we have made about the source in theorem \ref{06022025a} is that it satisfies Einstein's field equations. This means that any source that produces a static spacetime must yield $q^\mu=0$ in the reference frame where the coordinates are given by Eq.~\eqref{03012025a}. 

It is natural, therefore, to consider any spacetime with a timelike Killing vector as being in thermal equilibrium, or at least consider the fluid that is at rest in the coordinates adapted to the time symmetry to be in thermal equilibrium\footnote{Although we have not assumed that the fluid $\tensor{T}{_\mu_\nu}$ is at rest in the coordinates $x^\mu$, it seems clear that it has to be at rest because its motion would be incompatible with a static metric.}.

The case with a cosmological constant can be accounted for by taking $T_{\mathrm{\Lambda}}^{\mu\nu}=-2k\Lambda g^{\mu\nu}$ as part of the matter energy-momentum tensor. (For short, we call this fluid the $\Lambda$-fluid\footnote{It is generally believed to be related to a vacuum energy \cite{SeanCarrollSpacetimeandGeometry}}.) It turns out that both $q^\mu$ and $\tensor{\pi}{^\mu^\nu}$ vanish in this case for any $g_{\mu\nu}$ and any $u^\mu$, as will be shown in the next section.

According to Ref.~\cite{EllisBookRC}, when the strong energy condition is satisfied, there is a unique timelike $4$-velocity $u^\mu_{(e)}$ such that $q^\mu_{(e)}=-T^{\alpha\beta}u_{(e)\alpha } \tensor{h}{_{(e)}_\beta^\mu}=0$. The frame in which this happens is called \textit{energy frame} (or \textit{Landau-Lifshitz} frame). If such a frame always exist, whatever the spacetime, then thermal equilibrium is always possible.

\subsubsection{The Schwarzschild-de Sitter horizons}
In Ref.~\cite{PhysRevD.110.104065}, the authors raised the question of how the Schwarzschild–de Sitter horizons remain in thermal equilibrium at vastly different temperatures. They answered this question by changing the condition $q^\mu=0$ for thermal equilibrium to a nonvanishing $q^\mu$ that does not depend on time (stationary heat conduction). Here we propose a different explanation.

Instead of comparing two temperatures measured at two different locations, we focus on the temperature that an observer would attribute to photons coming from different locations when they arrive at the observer's location, assuming that the photon gas is in thermal equilibrium and obeys Planck's spectral law. This calculation is simple because of the ``law of energy red-shift'', which is given by $E_\mathrm{local}\sqrt{-g_{00}}=h\nu_0=$ constant (see e.g. Sec.~25.4 of Ref.~\cite{Gravitation}), where $h$ is Planck's constant. As already mentioned, the Tolman-Ehrenfest criterion is $T_\mathrm{local}\sqrt{-g_{00}}=T_0=$ constant. From these two equations, we see that an observer located at $r$ finds that $E_\mathrm{local}(r)/T_\mathrm{local}(r)=h\nu_0/T_0$. The dependency of Planck's law on temperature is exactly of the form $E_\mathrm{local}(r)/T_\mathrm{local}(r)$, which is constant. Therefore, regardless of the energies the photons had and the values of the local temperatures where they came from, the observer will perceive the whole universe as having only one temperature, the value $T_0$. (For a nice visualization of this thermal equilibrium, see Fig.~2 of Ref.~\cite{Santiago_2019}.

Of course, since $T_0$ is an integration constant, observers located at different values of $r$ will see the radiation in thermal equilibrium with different temperatures and, since there is no asymptotically flat region, we do not have a unique temperature to call the temperature of the horizons. Nonetheless, the conclusion that the horizons are at the same temperature is valid for all observers.

We therefore believe that there is no need to generalize the Tolman-Ehrenfest criterion for thermal equilibrium.

\subsection{The cosmological constant case}
The only fluid that is in thermal equilibrium in GR regardless of the spacetime geometry and observer's frame is a fluid of the type 
\begin{align}
\tensor{T}{^\mu^\nu}=-\rho_\Lambda g^{\mu\nu},
\label{28112025a}
\end{align}
where $\rho_\Lambda=2k\Lambda$ is constant ($\Lambda$ is the cosmological constant). To see this, we contract Eq.~\eqref{14112025b} with $u_\nu$, use Eq.~\eqref{14112025a}, and impose $q^\mu=0$. This gives
\begin{align}
\tensor{T}{^\mu^\nu}u_\nu=-\rho g^{\mu\nu}u_\nu,
\label{14112025c}
\end{align}
whose only solution for an arbitrary velocity is given by Eq.~\eqref{28112025a}.

Substitution of Eq.~\eqref{28112025a} into Eq.~\eqref{03012025c} yields $\tensor{\pi}{^\mu^\nu}=0$. Therefore, the fluid \eqref{28112025a} is the only perfect fluid that is ``perfect'' in all frames. 

Although the latter result is not new, see e.g. Sec. 5.2.2 of Ref.~\cite{EllisBookRC}, it is interesting to see it from the equations that relate two different decompositions. The general decomposition of an arbitrary $T^{\mu\nu}$ with respect to $u^\mu$ is given by Eq.~\eqref{14112025b}. Now, let $v^\mu$ be a different $4$-velocity vector field. The energy-momentum tensor $\tensor{T}{^\mu^\nu}$ can also be decomposed with respect to $v^\mu$:
\begin{align}
\tensor{T}{^\mu^\nu}=\rho_v v^\mu v^\nu+p_v \projection{^\mu^\nu}+2q_v^{(\mu}v^{\nu)}+\tensor{\pi}{_{v\!\!\!}^\mu^\nu},
\label{13032026a}
\\
\projection{_\mu_\nu}=g_{\mu\nu}+v_\mu v_\nu,\ \rho_v=\tensor{T}{_\mu_\nu}v^\mu v^\nu,\ p_v=\frac{1}{3}\tensor{T}{_\mu_\nu}\projection{^\mu^\nu},
\label{13032026b}
\\
q_v^\mu=-T^{\alpha\beta}v_\alpha \projection{_\beta^\mu},\ \tensor{\pi}{_{v\!\!\!}^\mu^\nu}=\projection{^\mu_\alpha}\projection{^\nu_\beta}\tensor{T}{^\alpha^\beta}-p_v\projection{^\mu^\nu},
\label{13032026c}
\end{align}
where 
\begin{align}
\projection{_\mu_\nu}v^\nu=0,\ q_{v\mu} v^\mu=0,\ \tensor{\pi}{_v_\mu_\nu}v^\nu=0,\ \tensor{\pi}{_{v\!\!}^\mu^\nu}\projection{_\mu_\nu}=0,
\label{13032026d}
\\
\tensor{\pi}{_{v\!\!}^\mu^\nu}g_{\mu\nu}=0.
\label{13032026e}
\end{align}
Substituting Eq.~\eqref{14112025b} into Eqs.~\eqref{13032026b} and \eqref{13032026c}, and using Eqs.~\eqref{13032026d}, \eqref{13032026e} and Eqs.~\eqref{02012026a}-\eqref{13032026f}, we get\footnote{These expressions can be obtained from Eqs.~(5.19)-(5.22) of Ref.~\cite{EllisBookRC} by changing $\tilde{u}^a$ to $v^\mu$, $\tilde{\rho}$ to $\rho_v$, $\tilde{q}^a$ to $q_v^\mu$, $\tilde{\pi}^{ab}$ to $\tensor{\pi}{_{v\!\!}^\mu^\nu}$, $v$ to $\beta$, and $v^a$ to $\gamma^{-1}h^\mu$. Note, however, that there is a difference here: the relative velocity we are using is between two observer at the same spacetime point, while in Ref.~\cite{EllisBookRC} the authors consider a finite displacement between them. In taking arbitrary values of the relative velocity, one has to assume that this displacement vanishes (see Sec.~4.5 of Ref.~\cite{EllisBookRC} for more details).}
\begin{align}
\rho_v=\gamma^2(\rho+p)-p-2\gamma q^\mu v_\mu+\tensor{\pi}{^\mu^\nu}v_\mu v_\nu,
\label{13032026h}
\end{align}
\begin{align}
p_v=\frac{1}{3}\left[\gamma^2(\rho+p)-\rho+2p-2\gamma q^\mu v_\mu+\tensor{\pi}{^\mu^\nu}v_\mu v_\nu\right],
\label{13032026i}
\end{align}
\begin{align}
q_v^\mu=\left[\gamma(\rho+p)-q^\lambda v_\lambda\right]b^\mu+\gamma q^\beta\projection{_\beta^\mu}-\tensor{\pi}{^\alpha^\beta}v_\alpha \projection{_\beta^\mu},
\label{13032026j}
\end{align}
\begin{align}
\tensor{\pi}{_{v\!\!}^\mu^\nu}=(\rho+p)[b^\mu b^\nu-\frac{1}{3}(\gamma^2-1)\projection{^\mu^\nu}]+2b^{(\mu}\projection{^{\nu)}_\lambda} q^\lambda
\nonumber\\
-\frac{1}{3}(-2\gamma q^\lambda v_\lambda+\tensor{\pi}{^\alpha^\beta}v_\alpha v_\beta)\projection{^\mu^\nu}+\tensor{\pi}{^\alpha^\beta}\projection{^\mu_\alpha}\projection{^\nu_\beta},
\label{13032026k}
\end{align}
where $\gamma$ is the Lorentz factor, and $b^\mu$ is the projection of $u^\mu$ onto the three-space orthogonal to $v^\mu$, given by Eq.~\eqref{02012026d} (see Appendix \ref{13032026g} for more details).

Equations \eqref{13032026h}-\eqref{13032026k}  are general, they hold for any energy-momentum tensor and for arbitrary timelike vector fields $u^\mu$ and $v^\mu$. Now when we consider the case of the cosmological constant, we have $p_\Lambda=-\rho_\Lambda$ and $q^\mu=\tensor{\pi}{^\mu^\nu}=0$. The above equations reduce to $\rho_v=\rho$, $p_v=-\rho$, and $q_v^\mu=\tensor{\pi}{_{v\!\!}^\mu^\nu}=0$.

\section{The gravitational case}\label{20042026b}
Despite the difficulties in defining an energy-momentum tensor for the gravitational field, and also a $4$-momentum for all spacetimes, it is interesting to check whether a promising approach to this problem can given consist results. Here, we test the spacetime $4$-momentum $P^a$ and the gravitational energy-momentum tensor $\energy{^\mu^a}$ of the TEGR. We show that they are consistent with the properties of a static fluid in a static spacetime by showing that the $3$-momentum $P^{(i)}$ and the heat fluxes vanish. We also show some interesting properties related to the $\Lambda$-fluid.

The choice of the TEGR is justified for the following reasons. In addition to being equivalent to general relativity, and therefore being in agreement with experiments, the TEGR deals with the gravitational energy problem in a more satisfactory manner: $\energy{^\mu^a}$ is traceless (compatible with a massless field) and independent of the coordinate system; the $4$-momentum $P^a$ can be seen as a generalization of the ADM momentum and has yielded many consistent results \cite{PhysRevD.106.044021}. 

Although the gravitation energy in the TEGR depends on the choice of what will be called here \textit{teleparallel frame} (TF),  and some choices lead to inconsistencies, it is possible to envisage a physical principle that guides this choice due to the natures of $P^a$ and $\energy{^\mu^a}$.

We give a brief review of the TEGR in the next section.

\subsection{Brief review}
In teleparallelism, the parallel transport of a vector is path-independent. This clearly includes special relativity as a particular case. However, when using the term ``teleparallelism'', one is usually referring to a theory with two affine connections: the Levi-Civita connection $\leviconnection{}$, which is usually represented by its coefficients in a coordinate basis, the Christoffel symbols, and a second connection $\nablatensor{}\ $ that is responsible for the distant parallelism.

There are many different formulations of teleparallelism \cite{doi:10.1142/S0219887823502158,AguiarGomes2023}, but we stick to the standard approach, whose connection $\nablatensor{}\ $ (called \Weitzenbock connection) has only torsion. In this case, there is always an orthonormal frame $\{\e{_a}\}=\{\e{_{(0)}},\e{_{(1)}},\e{_{(2)}},\e{_{(3)}}\}$ in which the connection coefficients of the \Weitzenbock connection vanishes\footnote{The coefficients of a connection in a tetrad basis are sometimes called \textit{spin connection}.}:
\begin{align}
\sconnection{^a_b_c}\equiv \bracket{\vartheta^a,\nablatensor{_b}\e{_c}}=0.
\label{31102025a}
\end{align}
(This is the definition of the TF.)

The torsion components can be written as $\torsion{^a_b_c}=2\sconnection{^a_{[bc]}}+\nonholon{^a_b_c}$, where $\nonholon{^a_b_c}$ is the object of anholonomity [see e.g. Eq.~\eqref{17032026a}], which  vanishes for coordinate bases\footnote{For example, in Minkowski, we have $\nonholon{^a_b_c}=0$ for $\e{_a}=(\partial_t,\partial_x,\partial_y,\partial_z)$.}. For a tetrad field satisfying Eq.~\eqref{31102025a}, we have $\torsion{^a_b_c}=\nonholon{^a_b_c}$. Hence, we can write $\torsion{^a_\mu_\nu}$ simply as
\begin{align}
\torsion{^a_\mu_\nu}=\pd{_\mu}\e{^a_\nu}-\pd{_\nu}\e{^a_\mu}.
\label{04102019p}
\end{align}

Einstein's field equations in the TEGR form is conveniently written as
\begin{equation}
\pd{_\alpha}\left( e\potential{^a^\mu^\alpha}\right)=\frac{e}{4k}\left(\energy{^\mu^a}+T^{\mu a}\right), \label{29032019k}
\end{equation}
where $\potential{_a_b_c}$ is called the \textit{superpotential}, $\energy{^\mu^a}$ is interpreted as the stress-energy tensor of the gravitational field, and $k=1/16\pi$ in geometrized units. In particular, one can use stokes' theorem to define a $4$-momentum for the spacetime:
\begin{align}
P^a=4k\oint_S dS_i e\Sigma^{a0i},
\label{30102025f}
\end{align}
where $S$ is the boundary of a region $V$. From Eq.~\eqref{29032019k} one can also write
\begin{align}
P^a=P_g^a+P^a_M,
\label{31032026b}
\end{align}
where
\begin{align}
P_g^a=\int_V d^3x e \energy{^0^a},\quad P_M^a=\int_V d^3x e T^{0a}
\label{24032026b}
\end{align}
are interpreted, respectively, as the gravitational and matter $4$-momenta within the region $V$.

Although $\potential{_a_b_c}$ and $\energy{^a^b}$ are given in terms of the torsion components \cite{ANDP:ANDP201200272}, when using the TF they can be written solely in terms of the Levi-Civita connection coefficients in the TF, denoted here by $\rsconnection{^a_b_c}$. In terms of this connection, we have (see, e.g., Sec.~5.1 of Ref.~\cite{doi:10.1142/S0219887824502116}): 
\begin{align}
\potential{_a_b_c}=\frac{1}{2}\rsconnection{_c_a_b}+\rsconnection{^d_d_{[c}}\tensor{\eta}{_{b]a}},
\label{30102025i}
\end{align}
\begin{align}
\energy{^b_a}= 2k\Bigl(2\rsconnection{^c_{[ad]}}\rsconnection{^b_c^d}-2\rsconnection{^b_{[ad]}}\rsconnection{^c_c^d}
-\rsconnection{^c_c_a}\rsconnection{^d_d^b}
\nonumber\\
+\delta^{b}_{a}\rsconnection{^c_{[c|f}}\rsconnection{^d_{|d]}^f}  \Bigr),
\label{30102025j}
\end{align}
where $\rsconnection{^a_b_c}$ can be given by
\begin{align}
\rsconnection{_a_b_c}=\frac{1}{2}(\torsion{_b_c_a}+\torsion{_c_b_a}-\torsion{_a_b_c}).
\label{30012025e}
\end{align}
[Equation \eqref{30012025e} is the well-known relation between $\rsconnection{_a_b_c}$ and $\nonholon{_a_b_c}$.]

Although $\energy{^b_a}$ is not necessarily symmetric, we apply the decomposition \eqref{14112025b} to its symmetric part. In fact, we are interested only in
\begin{align}
q^c_{\mathrm{g}}\equiv-\tensor{h}{_a^c}\energy{^{(ab)}}\eta_{(0)b}=-\frac{1}{2}\delta^c_i(\energy{^{(i)}_{(0)}}+\energy{_{(0)}^{(i)}}),
\label{30102025h}
\end{align}
where we set $\e{_{(0)}}=u$, which leads to $h_{ab}=\delta^i_a\delta^i_b=\delta^1_a\delta^1_b+\delta^2_a\delta^2_b+\delta^3_a\delta^3_b$ in the tetrad basis.

\subsection{Choosing the TF}
TEGR does not fix the kind of frame that satisfies Eq.~\eqref{31102025a}; we have to choose it. Since the solutions of Einstein's field equations are independent of this choice, one says that it is a gauge choice. However, when interpreting Eqs.~\eqref{30102025f} and \eqref{30102025j} as energies, we cannot treat Eq.~\eqref{31102025a} as a gauge: different choices lead to different energies and some choices are clearly inconsistent\footnote{This problem is similar to that of the pseudotensors of general relativity, but it is not exactly the same: the quantities defined by Eqs.~\eqref{30102025f} and \eqref{30102025j} have the same nature as the acceleration tensor, namely, they manifest the properties of the timelike congruence; it is even possible to calculate this tensor from  $\rsconnection{_a_b_c}$ alone.}. (For a discussion about possible principles to guide this choice, see Ref.~\cite{doi:10.1142/S0217732322502224}.)

We choose the TF to satisfy
\begin{align}
\e{_{(0)}^\mu}=u^\mu=\frac{\delta^\mu_0}{N},\quad N\equiv+\sqrt{-g_{00}}.
\label{29102025a}
\end{align}
This choice leads to
\begin{align}
\e{^{(i)}_0}=0.
\label{29102025b}
\end{align}
Furthermore, combining the restriction $g_{0i}=0$ of Eq.~\eqref{03012025a} with Eq.~\eqref{29102025a} leads to
\begin{align}
\e{^{(0)}_\mu}=N\delta_\mu^0,\ \e{_{(i)}^0}=0,
\label{29102025c}
\end{align}
which corresponds to the so-called \textit{time gauge}. In addition, we impose the condition
\begin{align}
\partial_0\e{^{(i)}_k}=0.
\label{30012025c}
\end{align}
[Note that the condition $\partial_0\e{^{(0)}_0}=0$ follows from Eqs.~\eqref{03012025a} and \eqref{29102025c}.]

The frame we have chosen is a natural choice from the physical point of view because it is at rest with respect to the fluid we want to study. In general, taking the frame as being at rest with respect to the system whose properties we are interested in serves as guidance for choosing the TF\footnote{This does not mean that we cannot study the physical system from the viewpoint of observers moving with respect to the fluid.}.

\subsection{Momentum and heat flux in the TEGR}
In thermal equilibrium, in addition to $q^\mu=0$, we may expect the $3$-momentum $P^{(j)}$ to vanish for any region with a boundary $S$. We now prove that this is the case for both matter and gravity.

In order to prove that $P^{(i)}=0$ and $q_{\mathrm{g}}^\mu=0$, we first show that 
\begin{align}
\rsconnection{_{(i)(0)(j)}}=\rsconnection{_{(i)(j)(0)}}=0.
\label{30012025d}
\end{align}
To find the values of $\rsconnection{_{(i)(0)(j)}}$ and $\rsconnection{_{(i)(j)(0)}}$, we need $\torsion{_{(0)(i)(j)}}$ and $\torsion{_{(i)(j)(0)}}$ only, since $\torsion{_a_b_c}=-\torsion{_a_c_b}$. From Eq.~\eqref{04102019p} we find
\begin{align}
\torsion{_{(0)(i)(j)}}=\e{_{(i)}^\mu}\e{_{(j)}^\nu}(\partial_\mu\e{_{(0)}_\nu}-\partial_\nu\e{_{(0)}_\mu}),
\label{30012025a}
\\
\torsion{_{(i)(j)(0)}}=\e{_{(j)}^\mu}\e{_{(0)}^\nu}(\partial_\mu\e{_{(i)}_\nu}-\partial_\nu\e{_{(i)}_\mu}).
\label{30012025b}
\end{align}
Applying the conditions in Eq.~\eqref{29102025c} to Eq.~\eqref{30012025a}, we see that $\torsion{_{(0)(i)(j)}}=0$. In turn, from Eqs.~\eqref{29102025a}-\eqref{30012025c}, we find $\torsion{_{(i)(j)(0)}}=0$. Now, using Eq.~\eqref{30012025e}, we finally see that Eq.~\eqref{30012025d} holds.

It is interesting to note that the result $\rsconnection{_{(i)(0)(j)}}=0$ implies that the rotations of the triad $\{\e{_{(i)}}\}$ vanish, while $\rsconnection{_{(i)(j)(0)}}=0$ implies that both the expansion and vorticity tensors vanish (See e.g. Sec.~II~D of Ref.~\cite{PhysRevD.108.044043}).

We see from Eq.~\eqref{30102025f} that we need to calculate $\potential{^{(j)}^0^i}$. Using Eq.~\eqref{30102025i} and imposing the conditions above, one finds that
\begin{align}
\potential{_{(j)}^0^i}=&\e{^{(0)0}}\e{^{(k)}^i}\potential{_{(j)(0)(k)}}
\nonumber\\
=&\frac{1}{2}\e{^{(0)0}}\e{^{(k)}^i}(\rsconnection{_{(k)(j)(0)}}+\rsconnection{^d_d_{(k)}}\eta_{(0)(j)}
-\rsconnection{^d_d_{(0)}}\eta_{(k)(j)})
\nonumber\\
=&0,
\label{30102025g}
\end{align}
where we have used Eq.~\eqref{30012025d}. (Notice that, since $\rsconnection{_a_b_c}=-\rsconnection{_c_b_a}$, we have $\rsconnection{^d_d_{(0)}}=\rsconnection{^{(i)}_{(i)}_{(0)}}=0$.) Hence, from Eq.~\eqref{30102025f}, we conclude that $P^{(j)}=0$.

Let us now focus on the heat fluxes. It is clear in Eq.~\eqref{30102025h} that we need to calculate only $q^{(i)}_{\mathrm{g}}$, because $q_{\mathrm{g}}^{(0)}=0$. Furthermore,  we have $\energy{^a_{(0)}}=\energy{_{(0)}^a}$. To see this, we use Eq.~(31) of Ref.~\cite{formiga2025angularmomentumteleparallelequivalent} and the conditions imposed on tetrad here. This gives
\begin{align}
2\energy{^{[(m)(0)]}}=-4k\left[\left(\partial_l\e{^{(0)}_0}\right)\potential{^{(m)}^l^0}-\left(\partial_l\e{^{(m)}_n}\right)\potential{^{(0)}^l^n}\right].
\end{align}
From the property $\potential{_{(m)}^0^l}=-\potential{_{(m)}^l^0}$ and Eq.~\eqref{30102025g}, we see that the first term on the right-hand side vanishes because 
$\potential{^{(m)}^l^0}=0$. In turn, the second term vanishes because $\potential{_{(0)}^l^n}=\e{^{(i)}^l}\e{^{(j)}^n}\potential{_{(0)(i)(j)}}=0$, where we have used Eqs.~\eqref{30102025i} and \eqref{30012025d}. Therefore $\energy{^a_{(0)}}=\energy{_{(0)}^a}$.

Now we are ready to prove that $q_{\mathrm{g}}^{(j)}$ vanishes. From Eq.~\eqref{30102025j} we obtain
\begin{align}
\energy{_{(j)(0)}}=&2k[\rsconnection{^c_{(0)}_d}\rsconnection{_{(j)}_c^d}-\rsconnection{^c_d_{(0)}}\rsconnection{_{(j)}_c^d}-\rsconnection{_{(j)(0)}_d}\rsconnection{^c_c^d}
\nonumber\\
&+\rsconnection{_{(j)}_d_{(0)}}\rsconnection{^c_c^d}-\rsconnection{^c_c_{(0)}}\rsconnection{^d_d_{(j)}}].
\label{30102025k}
\end{align}
It is straightforward to check from this equation and the results in Eq.~\eqref{30012025d} that  $\energy{_{(j)(0)}}=0$. Hence, we see from Eq.~\eqref{30102025h} that $q^c_{\mathrm{g}}=0$.

Since, in the TEGR, the total heat flux is given by\footnote{Notice that we have used $q^\mu$ in Sec.~\ref{20042026a} to denote the heat flow of the matter field, $T_{\mu\nu}$. Here, however, we use $q^\mu$ to represent the sum of the gravitation heat flow with that of the matter fields.} $q^c=q^c_{\mathrm{g}}+q_M^c$ and we have proved that $q^c_{\mathrm{g}}=q_M^c=0$, we conclude that there is no heat flux in a static spacetime from the viewpoint of a frame that satisfies the conditions given by Eqs.~\eqref{29102025a}, \eqref{29102025c} and \eqref{30012025c}.

From $\energy{_{(0)(j)}}=0$ and Eqs.~\eqref{24032026b} and \eqref{29102025c}, we see that $P_g^{(i)}=0$. 

In summary, we have $P_g^{(i)}=P_M^{(i)}=P^{(i)}=0$ and $q_g^\mu=q_M^\mu=q^\mu=0$.

\subsection{The $4$-momentum of the $\Lambda$-fluid}\label{06042026a}
Something very interesting happens when we assume that the $4$-momentum of the matter fields is given by Eq.~\eqref{24032026b}. Let us split $P^a_M$ into the $\Lambda$-fluid momentum and the other matter fields, i.e., $P^a_M=P_\Lambda^a+P^a_{M^\prime}$. From Eqs.~\eqref{28112025a} and \eqref{24032026b} we see that
\begin{align}
P_\Lambda^a=-\rho_\Lambda\int_V d^3x\, e\, \e{^a^0}.
\label{24032026c}
\end{align}
In the time gauge, i.e., when we chose $\e{^{(i)}^0}=0$, the three-momentum  $P_\Lambda^{(i)}$ vanishes.

Notice that this result does not depend on the choice of the TF, since Eq.~\eqref{24032026c} is independent of whether $\e{_a^\mu}$ is a TF: the choice of a TF affects only $P^a$ and $P_g^a$; the energy-momentum tensor $T_M^{\mu a}$ is proportional to the Einstein tensor $G^{\mu a}$, which does not depend on the choice of the TF.

Furthermore, the time gauge is not a restriction on the observers, it is just a relation between the hypersurface of constant $x^0$ and the frame (or coframe): in terms of the coframe $\vartheta^a=\e{^a_\mu} dx^\mu$, it means that $\vartheta^{(0)}\propto dx^0 $; in terms of the frame, it means that the triad $\e{_{(i)}}$ is inside the hypersurface. Therefore, the result $P_\Lambda^{(i)}=0$ can hold for any observer.

To interpret this result, let $\e{_a}$ be the rest frame of an observer $A$. If we assume that the components $\e{_a^\mu}$ in a coordinate system where the temporal coordinate is $x^0$ has to satisfy the time gauge, for consistency with the hypersurface of simultaneity $x^0= $~constant, then we arrive at the conclusion that the energy content of the fluid inside $x^0=$~constant is always at rest with respect to the observer $A$, whatever the observer's motion is.

Of course, if we apply a Lorentz boost, the new frame will not satisfy the time gauge in the old coordinate basis and the momentum inside the region $x^0=$ constant will not vanish. However, the new observer\footnote{The observer with respect to which the boosted frame is at rest.} predicts a vanishing three-momentum inside the hypersurface of simultaneity defined by a new time coordinate $\bar{x}^0$ that is supposed to be adapted to the new observer's rest frame.

To see this, consider two observers $A$ and $B$ whose rest frames are $\e{_a}$ and $\e[\bar]{_a}$, respectively. Assume that their frames are related to each other by means of the global Lorentz transformation $\e[\bar]{_a}=\lorentz{_a^b}\e{_b}$ with $\lorentz{_{(0)}^{(0)}}=\lorentz{_{(1)}^{(1)}}=\gamma$, $\lorentz{_{(0)}^{(1)}}=\lorentz{_{(1)}^{(0)}}=\beta\gamma$, and $\lorentz{_{(2)}^{(2)}}=\lorentz{_{(3)}^{(3)}}=1$, where $\gamma=1/\sqrt{1-\beta^2}$. Let us denote the coordinates adapted to $\e{_a}$ and $\e[\bar]{_a}$ by $x^\mu$ and $\bar{x}^\mu$, respectively. (It is assumed that $\e{_a}$ satisfies the time gauge in the basis $\{\partial/\partial x^\mu\}$, while $\e[\bar]{_a}$ satisfies this gauge in $\{\partial/\partial \bar{x}^\mu\}$.)  Applying Eq.~\eqref{24032026c} for both tetrads, we find
\begin{align}
E_\Lambda(\Sigma_A)=\rho_\Lambda \volume[\!\!]_A,\quad P_\Lambda^{(i)}(\Sigma_A)=0,
\label{26032026a}
\\
\bar{E}_\Lambda(\Sigma_B)=\rho_\Lambda \volume[\!]_B,\quad \bar{P}_\Lambda^{(i)}(\Sigma_B)=0,
\label{26032026b}
\end{align}
where $\Sigma_A$ and $\Sigma_B$ are the hypersurfaces of constant $x^0$ and $\bar{x}^0$, respectively, and the volumes $\volume[\!\!]_A$ and $\volume[\!]_B$ are given by\footnote{In the time gauge we have $e\e{_{(0)}^0}=\tensor[^3]{e}{}=\sqrt{\tensor[^3]{g}{}}$, where $\tensor[^3]{e}{}\equiv \det ||\e{^{(i)}_j}||$ and $\tensor[^3]{g}{}\equiv \det ||g_{ij}||$.}
\begin{align}
\volume[\!\!]_A=\int_{\Sigma_A}d^3x \sqrt{^3g},\quad \volume[\!]_B=\int_{\Sigma_B}d^3 \bar{x} \sqrt{^3\bar{g}},
\label{26032026c}
\end{align}
where $^3g$ and $^3\bar{g}$ are the determinants of $g_{ij}$ and $\bar{g}_{ij}$, the induced metrics on $\Sigma_A$ and $\Sigma_B$. (Both equations in Eq.~\eqref{26032026c} are invariant under coordinate transformations of the three-dimensional space.) Now, if we assume that the $4$-momentum of the fluid inside $\Sigma_A$ as seen by the observer $B$ is given by
\begin{align}
\bar{P}_\Lambda^a(\Sigma_A)=\int_{\Sigma_A} d^3x\, e\, \bar{T}^{0a},
\label{31032026a}
\end{align}
where the $0$ in $\bar{T}^{0a}$ is related to $x^0$ rather than $\bar{x}^0$, we find $\bar{P}_\Lambda^a(\Sigma_A)=\lorentz{^a_b}P_\Lambda^b(\Sigma_A)$, as expected. Therefore, $\bar{P}_\Lambda^{(1)}(\Sigma_A)=-\beta\gamma E_\Lambda$.

The latter calculation is exactly what is done in special relativity: one compares what $A$ and $B$ see in the same region of spacetime. In the limit case where the chosen\footnote{Although the integrals have been written in terms of the whole hypersurfaces, nothing prevents us from calculating the integrals \eqref{26032026c} and \eqref{31032026a} over some small regions inside them.} volume inside $\Sigma_A$ is small enough, we find the ordinary relation between the energy of a particle in its rest frame with that of a moving frame.

\subsubsection{Is $E_\Lambda$ universal?}
$\bar{P}_\Lambda^a(\Sigma_A)$ may seem more natural in the above sense. However, when we think in terms of the observers themselves, without worrying about their relation, what matters are the energies inside their own three-space, which are given by Eqs.~\eqref{26032026a} and \eqref{26032026b}. By shifting the focus from the energy inside the same region to the energy inside the same proper volume, i.e., $\volume[\!\!]_A=\volume[\!]_B$,  we conclude that all observers measure the same energy.

If we interpret the $\Lambda$-fluid as a fluid that is always present, either because it is the vacuum energy or because it is a sort of ``geometrical energy'', we could perhaps envisage a volume $\volume=\volume[\!\!]_A=\volume[\!]_B$ as having its own energy. For example, if we assume\footnote{Cosmological observations imply that $|\rho_\Lambda|\leq 10^{-8}\,$erg/cm$^3$ \cite{SeanCarrollSpacetimeandGeometry}. } that $\rho_\Lambda=10^{-8}\,$erg/cm$^3$, each cubic centimeter of space of any spacetime would intrinsically have $10^{-8}\,$erg of energy.

Notice that if $\rho_\Lambda$ is a universal constant, then the formula $E_\Lambda=\rho_\Lambda \volume$ is analogous to $E=mc^2$, i.e., spatial volume would be just a form of energy with $\rho_\Lambda$ playing the role of $c$. This analogy with $E=mc^2$ would become even more interesting if $E_\Lambda$ happens to be the smallest energy possible, in which case we could loosely say that it is the ``rest energy'' of the universe\footnote{Of course, to talk about the whole universe, we would have to consider $\volume$ as the spatial volume of the universe. To avoid a infinity volume, we could take $\volume$ as the volume of the observable universe.}; or more precisely, the minimal energy that is necessary to create a volume $\volume$.

\subsubsection{Minimal length scale and minimal value of $E_\Lambda$}
This interesting view of $\volume$ may be useful for theories of quantum gravity that claim that the spacetime has a minimal length scale. Those theories are commonly criticized for a possible conflict with Lorentz invariance when the same region of spacetime is compared by $A$ and $B$  \cite{PhysRevLett.88.190403}, i.e., when an equation like Eq.~\eqref{31032026a} is taken into account. However, it was argued in Ref.~\cite{PhysRevD.73.105013} that the minimal length should be measured in the rest frame of the two observers, which leads to a measurement of a different spacetime region for a boosted observer [like in Eqs.~\eqref{26032026a} and \eqref{26032026b}]; this is precisely the shift of focus described above. Hence, the universality of $\volume$ in the observers' rest frame may be suitable for the introduction of a minimal length.

To clarify this point, let us consider the following example. Assume that the minimal volume is the Planck volume, $\volume_p \approx 10^{-105}\,$m$^3$. Together with the assumption $\rho_\Lambda=10^{-8}\,$erg/cm$^3=10^{-9}\,$J/m$^3$ we obtain $E_{\Lambda,p}\approx 10^{-114}\,$J. Of course, this would not be the energy scale associated with the minimal length, because we would have to take into account the other energies in Eq.~\eqref{31032026b} that would be necessary to probe such small volume. However, since $E_{\Lambda,p}$ is fixed, it is independent of the processes used to measure the Planck volume. Therefore, through the relation $\volume_p=E_{\Lambda,p}/\rho_\Lambda$, one can establish the invariant notion of the scale\footnote{This does not mean that Planck energy is an invariant though. In principle, only the volume or length would be invariant.}.

Regardless of whether a minimal length contradicts Lorentz invariance, it is clear that a minimal energy for $E_\Lambda$ does not: in the context of Eq.~\eqref{31032026a}, we find $\bar{E}_\Lambda(\Sigma_A)=\gamma E_\Lambda(\Sigma_A) \geq E_\Lambda(\Sigma_A)$. Therefore, if $E_\Lambda$ has a minimal value in the rest frame of an observer, then no observer can measure a smaller value and all observers will agree on the minimal value. [It should also be clear from Eq.~\eqref{26032026a} that any minimal volume in the three-space of an observer implies a universal minimal value for $E_\Lambda$.]

In principle, relating the Planck energy $E_p$ to the total energy predicted by the TEGR, which is given by ``$E_{\mathrm{total}}=E_g+E_\Lambda+E_{M^\prime}$'' [see e.g. Eq.~\eqref{31032026b}], in an invariant way seems problematic. This is so because $E_{\mathrm{total}}$ depends on the choice of the TF even when the time gauge is assumed, and does not possess the nice property of depending solely on $\volume$. For example, the energy $E_{\mathrm{total}}$ vanishes for a flat universe \cite{Vargas2004}; this happens even at the level of the energy densities $\rho_g+\rho_\Lambda+\rho_{M^\prime}=0$ \cite{doi:10.1142/S021773232150125X}. The ``villain'' here is the gravitational energy density $\rho_g$, which is negative in this case. The TF of these results is adapted to freely falling particles\footnote{Notice that a freely falling frame is not necessarily a local inertial reference frame \cite{doi:10.1002/andp.201700175original}.} (a congruence of timelike geodesics) and satisfies the time gauge. On the other hand, it was shown in Ref.~\cite{doi:10.1142/S0217732322502224} that $\energy{^\mu^\nu}$ vanishes along the worldline of an observer when the TF is chosen to be adapted to the observer's proper reference frame (which also satisfies the time gauge); this means that $\rho_g=0$ along the observer's worldline and the total density reduces to the sum $\rho_\Lambda+\rho_{M^\prime}$, which does not vanish in general. (See Sec. 3.2 of Ref.~\cite{doi:10.1142/S0217732322502224}, especially the point 6 after theorem 2, for a discussion of this apparent contradiction.)

It seems more plausible that $E_p$ should be associated with the energy of the fields that can be used to probe distances, i.e., with $E_{M^\prime}$, despite the fact that this energy is not necessarily the same for the same volume of the observers' frames.

\subsubsection{Is $E_\Lambda$ the minimal energy of the universe?}

Based on the name ``vacuum energy'', which is commonly associated with the $\Lambda$-fluid, we are tempted to think that $E_\Lambda$ is the smallest energy a spacetime can have. However, the TEGR does not allow this conclusion trivially because of the ambiguity of $E_g$ described above ($E_g$ can even be negative, depending on the choice of the TF).

There are at least two cases in which the conclusion that $E_\Lambda$ is the minimal energy may hold. First, one could argue that the proper way to interpret Eq.~\eqref{31032026b} is by choosing a TF such that $E_g\geq 0$. The second possibility is to deny the role of $P^a$ as the physically meaningful spacetime $4$-momentum, due to the ambiguity of $P_g^a$, and use $P_{\mathrm{physical}}=P^a-P_g^a=P^a_\Lambda +P_{M^\prime}^a$ instead. In all these cases, $E_\Lambda$ would be the minimal energy possible because the total energy would be the sum of nonnegative energies.

However, there are some problems with those interpretations. The former interpretation seems too artificial, since there is no particular reason to think that the gravitation energy inside any region is always positive\footnote{In fact, in some situations, it is expected to be negative. See, for example, Sec.~19.8 of Ref.~\cite{PenroseRoadtoReality2007}.}. With respect to the latter, the $P_{\mathrm{physical}}$ defined above cannot reproduce the ADM $4$-momentum and predicts no energy for the gravitational field, including gravitational waves; this would be a complete denial of gravitational energy\footnote{See Ref.~\cite{AlemanBerenguer2025} for an interesting discussion about whether there is such a thing as a gravitational energy.}.

In any case, it is still possible to talk about $E_\Lambda$ as the vacuum energy of the matter fields, despite the above difficulties and the difficulties with the notion of vacuum in quantum field theory (QFT) \cite{SeanCarrollSpacetimeandGeometry}. In flat spacetimes, the zero-point energy is arbitrary, which means that the energy of vacuum is also arbitrary. However, in a curved spacetime, the energy of the so-called matter fields are well localized and therefore the minimal energy cannot be arbitrary. The ambiguity of the minimal energy discussed in this section is due to the gravitational energy $E_g$, which seems to be nonlocal; it is not because either $E_\Lambda$ or $E_{M^\prime}$ is nonlocal or arbitrary. So, it is possible to argue that $E_\Lambda$ is the minimal energy of the matter fields and, unlike what happens in QFT in flat spacetimes, this vacuum is universal and well defined.

Of course, the discussion here is classical and therefore cannot account for quantum effects. It is intended only to give a general perspective to the role that the $\Lambda$-fluid can play in gravity, with the hope that some of the conclusions can be carried to a quantum theory of gravity.

\subsection{Spacetime energy}
The results shown in Sec.~\ref{06042026a} hold for any spacetime. Let us now come back to the static case and obtain a simplified version of the total energy.

Contracting Eq.~\eqref{30102025i} with $\eta^{a(0)}\e{^b^0}\e{^c^j}$ and splitting the indices of $\rsconnection{^a_b_c}$ into spatial and temporal parts, we find, after some algebra, that
\begin{align}
\potential{^{(0)}^0^j}=\frac{1}{2}\left(\e{^{(0)0}}\e{^{(i)j}}-\e{^{(0)j}}\e{^{(i)0}}\right)\rsconnection{^{(k)}_{(k)}_{(i)}}
\nonumber\\
-\frac{1}{2}e{^{(k)0}}e{^{(i)j}}\rsconnection{_{(i)(0)(k)}}.
\label{08012026a}
\end{align}
The last term on the right depends on the rotation of the triad $\{\e{_{(i)}}\}$, since $\phi_{ab}=\rsconnection{_b_{(0)}_a}$ is the acceleration tensor.

Equation \eqref{08012026a} is a general expression for $\potential{^{(0)}^0^j}$. Imposing Eqs.~\eqref{29102025a}  and \eqref{29102025c}, we obtain the much simpler expression $\potential{^{(0)}^0^j}=-(1/2N)\e{^{(i)j}}\rsconnection{^{(k)}_{(k)(i)}}$; we also obtain the relation $e=N\, \tensor[^3]{e}{}$, where $\tensor[^3]{e}{}\equiv \det ||\e{^{(i)}_j}||$ is the determinant of the triad. Using these results in Eq.~\eqref{30102025f}, we find that\footnote{Equation \eqref{08012026b} is basically the same as  Eq.~(2.9) of Ref.~\cite{PhysRevD.65.124001}.}
\begin{align}
E=-2k\oint_S dS_j \tensor[^3]{e}{} \e{^{(i)j}}\rsconnection{^{(k)}_{(k)(i)}}.
\label{08012026b}
\end{align} 

Since the tetrad field does not depend on time, the spacetime energy does not depend either. Furthermore, $E$ depends only on the triad evaluated on the hypersurface of constant $x^0$. These properties are compatible with a static spacetime.

It is important to emphasize though that  Eq.~\eqref{08012026b} can still give inconsistent results. In general, it is necessary to avoid triads adapted to curvilinear coordinate systems.

\section{Static spherically symmetric spacetimes}
Let us decompose the components of the coframe in the form
\begin{align}
\e{^a_\mu}=-\hat{t}_\mu\hat{t}^a+\hat{r}_\mu\hat{r}^a+\hat{\theta}_\mu\hat{\theta}^a+\hat{\phi}_\mu\hat{\phi}^a,
\label{12012026e}
\end{align}
where the components of the unit vector fields $\{\hat{t},\hat{r},\hat{\theta},\hat{\phi}\}$ in the tetrad basis are defined by
\begin{align}
\hat{t}^a\equiv \delta^a_0,\ \hat{r}^a\equiv \sin\theta(\cos\phi\delta^a_1+\sin\phi\delta^a_2)+\cos\theta\delta^a_3,
\label{12012026f}
\\
\hat{\theta}^a\equiv\partial_\theta\hat{r}^a=\cos\theta(\cos\phi\delta^a_1+\sin\phi\delta^a_2)-\sin\theta\delta^a_3,
\label{12012026g}
\\
\hat{\phi}^a\equiv \partial_\phi(\hat{r}^a/\sin\theta)=-\sin\phi\delta^a_1+\cos\phi\delta^a_2.
\label{12012026h}
\end{align}
(For simplicity, we do not use the notation $\delta^a_{(0)}$, $\delta^a_{(1)}$,...,  since there is no risk of confusion when dealing with the Kronecker deltas: $\delta^{(1)}_1=\delta^1_1$, $\delta^{(1)}_0=\delta^1_0$ and so on.)

Now consider a general static spherically symmetric spacetime of the form
\begin{align}
ds^2=-B^2(r)dt^2+C^2(r)dr^2+r^2d\Omega^2.
\label{09012026a}
\end{align}
The above decomposition does not fix the tetrad field, and therefore the TF is still undefined. In choosing the TF, we take the components $\hat{t}_\mu$, $\hat{r}_\mu$, $\hat{\theta}_\mu$ and $\hat{\phi}_\mu$  as
\begin{align}
\hat{t}_\mu=-B(r)\delta^0_\mu,\ \hat{r}_\mu=C(r)\delta_\mu^1,\ \hat{\theta}_\mu=r\delta_\mu^2,\ \hat{\phi}_\mu=r\sin\theta\delta_\mu^3.
\label{12012026i}
\end{align}
We can use the inverse of the metric \eqref{09012026a} to raise the indices and obtain
\begin{align}
\hat{t}^{\mu}=\frac{1}{B(r)}\delta^\mu_0,\ \hat{r}^\mu=\frac{\delta^\mu_1}{C(r)},\ \hat{\theta}^\mu=\frac{1}{r}\delta^\mu_2,\ \hat{\phi}^\mu=\frac{\delta^\mu_3}{r\sin\theta}.
\label{18022026a}
\end{align}

The tetrad given by Eqs.~\eqref{12012026e}-\eqref{12012026h} and \eqref{12012026i} is adapted to isotropic rectangular coordinates\footnote{Although the form given by Eq.~\eqref{12012026d} is simpler than that of Eq.~\eqref{12012026e}, the latter is more convenient because, due to the spherical symmetry of the spacetime, it yields simpler expressions for quantities such as $\torsion{^a_b_c}$ and $\rsconnection{^a_b_c}$.} (Appendix \ref{12012026j}) and, therefore, it is not adapted to a curvilinear coordinate system.

The torsion components are (Appendix \ref{18022026c})
\begin{align}
\torsion{_a_b_c}=\frac{2B'}{BC}\hat{t}_a\hat{t}_{[b}\hat{r}_{c]}-\frac{2(C-1)}{rC}(\hat{\theta}_a\hat{r}_{[b}\hat{\theta}_{c]}+\hat{\phi}_a\hat{r}_{[b}\hat{\phi}_{c]}),
\label{18022026b}
\end{align}
where the prime denotes differentiation with respect to $r$. From now on, we deal only with algebraic calculations. From Eq.~\eqref{30012025e} and \eqref{18022026b}, we obtain
\begin{align}
\rsconnection{_a_b_c}=\frac{2B'}{BC}\hat{t}_b\hat{r}_{[a}\hat{t}_{c]}+\frac{2(C-1)}{rC}(\hat{\theta}_b\hat{r}_{[a}\hat{\theta}_{c]}+\hat{\phi}_b\hat{r}_{[a}\hat{\phi}_{c]}).
\label{18022026d}
\end{align}

The acceleration tensor $\tensor{\phi}{_a_b}$ can be computed from the relation $\tensor{\phi}{_a_b}=\rsconnection{_b_{(0)}_a}=\rsconnection{_b_c_a}\hat{t}^c$. Contracting Eq.~\eqref{18022026d} with $\hat{t}^b$, we find
\begin{align}
\tensor{\phi}{_a_b}=-\frac{2B'}{BC}\hat{t}_{[a}\hat{r}_{b]}.
\label{18022026e}
\end{align}
From this we see that $\tensor{\phi}{_{(i)(j)}}=0$, i.e. the frame is Fermi-Walker transported along the timelike geodesics $r$, $\theta$ and $\phi$ constant. (The triad does not rotate.) However, we have the acceleration $\tensor{\phi}{_{(0)(i)}}=(B'/BC)\hat{r}_{(i)}$. This means that the nongravitational forces exert no torque on the frame but they are accelerating it along $\hat{r}$.

From Eq.~\eqref{18022026d} we can also infer the energy given by Eq.~\eqref{08012026b}. Equation \eqref{18022026d} yields $\rsconnection{^{(k)}_{(k)(i)}}=-2(C-1)\hat{r}_{(i)}/rC$, while $\tensor[^3]{e}{}=\sqrt{\det ||g_{ij}||}$ gives $\tensor[^3]{e}{}=Cr^2\sin\theta$. Since $\e{^{(i)}^j}=\hat{r}^{(i)}\hat{r}^j+\hat{\theta}^{(i)}\hat{\theta}^j+\hat{\phi}^{(i)}\hat{\phi}^j$, we have $\tensor[^3]{e}{}\e{^{(i)}^j}\rsconnection{^{(k)}_{(k)(i)}}=-2(C-1)r\sin\theta\hat{r}^j=-2(1-1/C)r\sin\theta\delta_1^j$, where we have used Eq.~\eqref{18022026a} in the last equality. Thus, integrating  Eq.~\eqref{08012026b} over a sphere of radius $r$ leads to
\begin{align}
E=16\pi kr(1-C^{-1}).
\label{18022026f}
\end{align}

Notice that, when obtaining Eq.~\eqref{30102025f} [and of course Eq.~\eqref{08012026b}], we assumed that there is no singularity in the region inside $S$. This, however, does not exclude spacetimes with singularities, because we can enclose the singularities and take the spacetime energy as being the value of the integral in Eq.~\eqref{08012026b} on the external boundary.

\subsection{Worked examples}

\subsubsection{Reissner-Nordström-de Sitter black hole}
In this case we have $C(r)=1/B(r)$ and 
\begin{align}
B^2(r)=1-\frac{2M}{r}+\frac{Q^2}{r^2}-\frac{r^2}{L^2},
\label{24022026a}
\end{align}
where $M$ is the mass, $Q$ is the total electric charge, and $L$ is the de Sitter length\footnote{The cosmological constant is $\Lambda=3/L^2$.}. (When convenient, we can write $M=Gm/c^2$ and $Q=q\sqrt{G/4\pi\epsilon_0 c^4}$, where $G$ is the gravitational constant, $m$ the mass, $\epsilon_0$ the permittivity constant, $c$ the speed of light, and $q$ the charge in the International System of Units.)

Hence, from Eq.~\eqref{18022026f} the external boundary yields
\begin{align}
E=16\pi kr\left[1-\left(1-\frac{2M}{r}+\frac{Q^2}{r^2}-\frac{r^2}{L^2}\right)^{1/2}\right].
\label{24022026b}
\end{align}
This expression generalizes those of Refs.~\cite{PhysRevD.47.1407} and \cite{10.1063/1.531777}: the former, which was obtained from the gravitational action, does note have the cosmological term, while the latter does not have $Q$.

Assuming that there are at least two horizons at $r_\pm$ (vanishing $B$) for which $r_+>r_-$, the quasi-local energy inside the region $r_-\leq r\leq r_+$ is simply $16\pi k(r_+-r_-)$, which is positive.

\subsubsection{Regular black hole}

Let us apply Eq.~\eqref{18022026f} to a regular black hole in the context of the nonlinear electrodynamics (NED). We consider the case \cite{Kumar2021}
\begin{align}
B^2=1/C^2=1-\frac{2}{r}M e^{-\lambda^2/r},
\label{03032026a}
\end{align}
where $\lambda$ is a parameter that depends on the magnetic monopole charge. (This spacetime becomes the Schwarzschild spacetime when $\lambda=0$, and can be identified as the Reissner-Nordstr{\"o}m black for $r\gg \lambda^2$.) 

Substituting $C$ above into Eq.~\eqref{18022026f} we find
\begin{align}
E=16\pi k r\left[1-\left(1-\frac{2}{r}M e^{-\lambda^2/r}\right)^{1/2}\right].
\label{03032026b}
\end{align}
We obtain $E=M$ in the limit $r\to\infty$, and $0$ as $r \to 0^+$.

Depending on the value of $\lambda$, the term inside the parentheses in Eq.~\eqref{03032026b} can become negative. This does not happen when the maximum value of $(2M/r) e^{-\lambda^2/r}$ is less than unity; this corresponds to the condition $1>2Me^{-1}/\lambda^2$, in which case there is no black hole.

\section{Conclusions}\label{20042026c}
In this paper we have shown that, from the viewpoint of static observers in a static spacetime, the heat fluxes  must vanish when backreaction is not neglected. In addition,  we argued that the two horizons of the Schwarzschild-de Sitter spacetime can be seen as having the same temperature from a nonlocal point of view, and that this is the proper way to compare their temperatures when discussing thermal equilibrium. We have also emphasized the fact that the fluid associated to the cosmological constant, the $\Lambda$-fluid, is in thermal equilibrium in any spacetime, not only in static ones.

In order to investigate the energy of both the matter and gravitation fields, we have applied the TEGR approach to the gravitational energy problem using a frame that is adapted to static observers. (The rotations of the triad, the expansion tensor, and the vorticity of this frame vanish.) We have shown that all three-momenta vanish.

We have analyzed the energy and momentum of the $\Lambda$-fluid in an arbitrary spacetime and found that the three-momentum of this fluid vanishes in the time gauge. We have argued that the energy content of this fluid is at rest with respect to any observer. Furthermore, we have shown that we can see any spatial volume $\volume$ as having an intrinsic energy given by $\rho_\Lambda \volume$.  We have discussed the possibility that $E_\Lambda$ is the minimal energy and found that the TEGR does not allow that conclusion objectively, because of the zero-point energy of the gravitational field. Nevertheless, we can objectively see $E_\Lambda$ as the minimal energy of the matter fields, since the minimal energy of the other fields is zero (at least classically) and $E_\Lambda$ cannot vanish in a nonzero volume. Curiously, if the spacetime happens to have a minimal volume, as is proposed by some quantum theories of gravity, then $E_\Lambda$ will also have a minimal value. We have argued that this property may help to establish a theory with minimal length scale without breaking Lorentz invariance or resort to deformations of special relativity.

As an application of the TEGR approach, we calculated the total energy inside a region of radius $r$ for the cases of a Reissner-Nordström-de Sitter black hole and a regular black hole. In both cases, the results turned out to be consistent with our expectations.

% Specify following sections are appendices. Use \appendix* if there
% only one appendix.

\appendix

\section{Identities relating $u$ and $v$}\label{13032026g} % 20/11/2025
Let $u$ and $v$ be two timelike vector fields and $\{\e{_a}\}$ a frame such that $\e{_{(0)}}=u$. If we see $u$ as representing the $4$-velocity of an observer\footnote{In fact, a set of observers whose congruence are the integral curves of $u$.}, then $\{\e{_a}\}$ can be seen as the rest frame of this observer. Now let $v$ be the velocity field of a second observer. The spatial components of $v$ in $\{\e{_a}\}$, denoted by $v^{(i)}$, are proportional to the relative velocity of the second observer with respect to first one. To be more precise, the components $v^a$ are nothing but the Lorentz transformation $\lorentz{_{(0)}^a}$ from $\e{_a}$ to $v$: if $\{\e[\bar]{_a}\}$ is the rest frame of the second observer, where $\e[\bar]{_{(0)}}=v$, then $\e[\bar]{_a}=\lorentz{_a^b}\e{_b}$ implies that $v=\lorentz{_{(0)}^b}\e{_b}$; since $v=v^b\e{_b}$, we find that $v^b=\lorentz{_{(0)}^b}$.  Hence we have $v^{(0)}=\gamma$ and $v^{(i)}=\gamma\beta^{(i)}$, where $\beta^{(i)}$ is the velocity of the second observer with respect to the first one. In this case, the Lorentz factor can be written as $\gamma=1/\sqrt{1-\beta^2}$ with $\beta^2\equiv \beta^{(i)}\beta^{(j)}\eta_{(i)(j)}$. Therefore, we have $v\cdot u= v^au^b \eta_{ab}=v^a\delta^b_{(0)} \eta_{ab}=v^a\eta_{a(0)}=-v^{(0)}=-\gamma$.

From the above result, it is straightforward to show the identities:
\begin{align}
u\cdot v=-\gamma,
\label{02012026a}
\\
b_\nu\equiv u^\mu\projection{_\mu_\nu}=u_\nu-\gamma v_\nu,
\label{02012026d}
\\
h_\nu\equiv v^\mu\tensor{h}{_\mu_\nu}=v_\nu-\gamma u_\nu.
\label{02012026e}
\\
u^\mu b_\mu=u^\mu u^\nu \projection{_\mu_\nu}=v^\mu h_\mu=v^\mu v^\nu \tensor{h}{_\mu_\nu}=\gamma^2-1,
\label{02012026b}
\\
\tensor{h}{^\mu^\nu}\projection{_\mu_\nu}=\gamma^2+2,
\label{02012026c}
\\
\tensor{h}{^\alpha^\beta}\projection{^\mu_\alpha}\projection{^\nu_\beta}=\projection{^\mu^\nu}+b^\mu b^\nu,
\label{13032026f}
\\
\projection{^\alpha^\beta}\tensor{h}{^\mu_\alpha}\tensor{h}{^\nu_\beta}=\tensor{h}{^\mu^\nu}+h^\mu h^\nu,
\\
b_\mu\projection{^\mu^\nu}=b^\nu, \ h_\mu\tensor{h}{^\mu^\nu}=h^\nu.
\end{align}
Notice that $b^\mu$ is the projection of $u^\mu$ onto the three-space orthogonal to $v^\mu$, while $h^\mu$ is the projection of $v^\mu$ onto the three-space orthogonal to  $u^\mu$.

\section{Isotropic coordinate system}\label{12012026j}
The line element given by Eq.~\eqref{09012026a} can be put in the isotropic form
\begin{align}
ds^2=-B^2(r)dt^2+\frac{r^2}{\rho^2}(d\rho^2+\rho^2d\Omega^2)
\label{12012026a}
\end{align}
by means of the coordinate transformation
\begin{align}
\rho=\rho_0\exp\left(\int_{\rho_0}^r\frac{C(u)}{u}du\right).
\label{12012026b}
\end{align}
Using the coordinates $x$, $y$, and $z$ defined by
\begin{align}
x=\rho\sin\theta\cos\phi,\ y=\rho\sin\theta\sin\phi,\ z=\rho\cos\theta,
\label{12012026c}
\end{align}
we can write the tetrad that is adapted to the these coordinates as
\begin{align}
\bar{\vartheta}^{(0)}=Bdt,\ \bar{\vartheta}^{(1)}=\frac{r}{\rho}dx,\ \bar{\vartheta}^{(2)}=\frac{r}{\rho}dy,\ \bar{\vartheta}^{(3)}=\frac{r}{\rho}dz.
\label{12012026d}
\end{align}

To show that the frame given by Eqs.~\eqref{12012026e}-\eqref{12012026h} and \eqref{12012026i} is exactly the same as \eqref{12012026d}, one can simply evaluate $\vartheta^{(1)}$, $\vartheta^{(2)}$, and $\vartheta^{(3)}$ using these equations. For example,
\begin{align}
\vartheta^{(1)}=&\e{^{(1)}_\mu}dx^\mu=\hat{r}_\mu dx^\mu\hat{r}^{(1)}+\hat{\theta}_\mu dx^\mu\hat{\theta}^{(1)}+\hat{\phi}_\mu dx^\mu\hat{\phi}^{(1)}
\nonumber\\
=&C(r)dr\sin\theta\cos\phi+rd\theta\cos\theta\cos\phi-r\sin\theta\sin\phi d\phi
\nonumber\\
=&\frac{rd\rho}{\rho}\sin\theta\cos\phi+r\cos\theta\cos\phi d\theta-r\sin\theta\sin\phi d\phi
\nonumber\\
=&\frac{r}{\rho}(\sin\theta\cos\phi d\rho+\rho\cos\theta\cos\phi d\theta-\rho\sin\theta\sin\phi d\phi)
\nonumber\\
=&\frac{r}{\rho}dx,
\end{align}
where we have used $C(r)dr=(r/\rho)d\rho$, which can be obtained from Eq.~\eqref{12012026b}. An analogous calculation for $\vartheta^{(2)}$ and $\vartheta^{(3)}$  shows that $\vartheta^a$ is given by Eq.~\eqref{12012026d}. 

\section{Object of anholonomity}\label{18022026c}
From the definition of the Lie bracket of two vector fields $v$ and $w$ acting on a smooth function $f$, i.e. $[v,w]f\equiv v[w[f]]-w[v[f]]$, one can define the quantity
\begin{align}
\tensor{\Omega}{^a_b_c}\equiv-\bracket{\vartheta^a,[e_b,e_c]}=\e{_b^\mu}\e{_c^\nu}(\pd{_\mu}\e{^a_\nu}-\pd{_\nu}\e{^a_\mu}).
\label{17032026a}
\end{align}
This is called the object of anholonomity\footnote{One can find similar definitions in the literature such as ``the structure functions of the frame'' \cite{PhysRevD.91.084026} or commutation coefficients of the basis $\{\e{_a}\}$ \cite{Gravitation}. (Be careful with the sign convention.)}. It vanishes only when $\{\e{_a}\}$ is a coordinate basis.

From the orthonormality of the basis $\{\hat{t},\hat{r},\hat{\theta},\hat{\phi}\}$, it is easy to see that
\begin{align}
\nonholon{^a_\mu_\nu}=-\nonholon{^{\hat{t}}_\mu_\nu}\hat{t}^a+\nonholon{^{\hat{r}}_\mu_\nu}\hat{r}^a+\nonholon{^{\hat{\theta}}_\mu_\nu}\hat{\theta}^a+\nonholon{^{\hat{\phi}}_\mu_\nu}\hat{\phi}^a,
\end{align}
where $\nonholon{^{\hat{t}}_\mu_\nu}=\nonholon{^a_\mu_\nu}\hat{t}_a$, $\nonholon{^{\hat{r}}_\mu_\nu}=\nonholon{^a_\mu_\nu}\hat{r}_a$, and so on; they can be calculated from \cite{Formiga2021Braz} 
\begin{align}
\nonholon{^{\hat{t}}_\mu_\nu}=2\partial_{[\mu}\hat{t}_{\nu]}, \label{25112019c}
\\ 
\nonholon{^{\hat{r}}_\mu_\nu}=2\partial_{[\mu}\hat{r}_{\nu]}-2\delta^2_{[\mu}\hat{\theta}_{\nu]}-2\sin\theta\delta^3_{[\mu}\hat{\phi}_{\nu]}, \label{25112019d}
\\ 
\nonholon{^\bbtheta_\mu_\nu}=2\partial_{[\mu}\hat{\theta}_{\nu]}+2\delta^2_{[\mu}\hat{r}_{\nu]}-2\cos\theta\delta^3_{[\mu}\hat{\phi}_{\nu]}, \label{25112019dd}
\\ 
\nonholon{^\bbthree_\mu_\nu}=2\partial_{[\mu}\hat{\phi}_{\nu]}+2\sin\theta\delta^3_{[\mu}\hat{r}_{\nu]}+2\cos\theta\delta^3_{[\mu}\hat{\theta}_{\nu]}. \label{25112019e}
\end{align}

In a basis $\{\e{_a}\}$, the torsion components can be written as $\torsion{^a_b_c}=2\sconnection{^a_{[bc]}}+\nonholon{^a_b_c}$, where $\sconnection{^a_{bc}}$ are the connection coefficients expanded in this basis. Since the TF is defined by $\sconnection{^a_{bc}}=0$, the torsion components coincide with the anholonomity object in this frame. Hence, we can use the above equations to calculate $\torsion{^a_\mu_\nu}$. 

The viability of using this approach to calculate $\nonholon{^a_\mu_\nu}$, and consequently $\torsion{^a_\mu_\nu}$, depends on the symmetries involved. Here, the spherical symmetry combined with the Cartesian-like tetrad justify its use. From Eq.~\eqref{12012026i}, we see that $\delta^2_{[\mu}\hat{\theta}_{\nu]}=\delta^3_{[\mu}\hat{\phi}_{\nu]}=0$. In turn, when applying $\partial_\mu$ to $\hat{t}_\mu$, $\hat{r}_\mu$, $\hat{\theta}_\mu$ and $\hat{\phi}_\mu$, we can use the identities $\partial_\mu r=\delta^1_\mu$ and $\partial_\mu \theta=\delta^2_\mu$ and, then, we use Eq.~\eqref{12012026i} in order to write all the Kronecker deltas in terms of $\hat{t}_\mu$, $\hat{r}_\mu$, $\hat{\theta}_\mu$. Doing so leads to Eq.~\eqref{18022026b}.

% If you have acknowledgments, this puts in the proper section head.
\section*{Acknowledgments}
M. L. Rodrigues acknowledges CAPES for financial support.

%\bibliography{C:/Users/janse/OneDrive/Bibtex/bibJBF.bib}
%\bibliography{C:/Users/janse/OneDrive/Bibtex/bibJBF.bib}

%

\end{document}